\newcommand{\R}{\mathbb{R}}
\newcommand{\BM}{\begin{bmatrix}}
\newcommand{\EM}{\end{bmatrix}}
\newcommand{\be}{\begin{equation}\begin{aligned}}
\newcommand{\ee}{\end{aligned}\end{equation}}
\newtheorem{definition}{Definition}
\newtheorem{theorem}{Theorem}
\title{\LARGE \bf
A novel model class for bowtie biological networks with universal classification properties
}
\author{Charles A. Johnson$^{1}$, Keon Ho (Daniel) Park and Enoch Yeung
\thanks{*This work was funded in part by an NSF CAREER Award 2240176, the Army Young Investigator Program Award W911NF2010165 and the Institute of Collaborative Biotechnologies/Army Research Office grants W911NF19D0001, W911NF22F0005, W911NF190026, and
W911NF2320006. This work was also supported in part by
a subcontract awarded by the Pacific Northwest National
Laboratory for the Secure Biosystems Design Science Focus
Area “Persistence Control of Engineered Functions in Complex Soil Microbiomes” sponsored by the U.S. Department of Energy Office of Biological and Environmental Research.}
\thanks{$^{1}$Charles A. Johnson ({\tt\small cajohnson@ucsb.edu}), Keon Ho (Daniel) Park ({\tt\small keonhopark@ucsb.edu}) and Enoch Yeung ({\tt\small eyeung@ucsb.edu})  are with Faculty of Mechanical Engineering
        UC Santa Barbara, Santa Barbara, USA
        }%
}
\begin{document}

\maketitle
\thispagestyle{empty}
\pagestyle{empty}

\begin{abstract}

Cell sensory transcription networks are the intracellular computation structure that regulates and drives cellular activity. Activity in these networks determines the the cell's ability to adapt to changes in its environment. Resilient cells successfully identify (classify) and appropriately respond to environmental shifts. We present a model for identification and response to environmental changes in resilient bacteria. This model combines two known motifs in transcription networks: dense overlapping regulons (DORs) and single input modules (SIMs). DORs have the ability to perform cellular decision making and have a network structure similar to that of a shallow neural network, with a number of input transcription factors (TFs) mapping to a distinct set of genes. SIMs contain a master TF that simultaneously activates a number of target genes. Within most observed cell sensory transcription networks, the master transcription factor of SIMs are output genes of a DOR creating a fan-in-fan-out (bowtie) structure in the transcriptional network. We model this hybrid network motif (which we call the DOR2SIM motif) with a  superposition of modular nonlinear functions to describe protein signaling in the network and basic mass action kinetics to describe the other chemical reactions in this process. We analyze this model's biological feasibility and capacity to perform classification, the first step in adaptation. We provide sufficient conditions for models of the DOR2SIM motif to classify constant (environmental) inputs. These conditions suggest that generally low monomer degradation rates as well as low expression of source node genes at equilibrium in the DOR component enable classification.

\end{abstract}

\section{Introduction}

To respond to changes in their environment, bacteria and other cells need to perform in-cell computation through networks of chemical reactions. The transcription and translation of messenger RNA into proteins interplays with environmental signals to identify and respond to new environments \cite{jacob1961genetic}. Successful cells will correctly classify their current environment and appropriately produce the proteins needed for that context \cite{lopez2008tuning}. Classification can be thought of as a control policy acting on a continuous signal space and reducing it to a discrete action from a finite or countably finite list of options. It is interesting because it is a black-box “analog-to-digital” or “continuous-to-discrete” transformation that biological cells have to make each day. We ask the question, ``how is biological classification represented computationally and structurally in the architecture and evolved design of living cells?'' Answering this question opens the door to further inquiry. What are the energy and metabolic benefits of such control and how can these evolved computation structures be applied to improve the efficiency of human-designed systems such as AI?

Dedicated biologists and systems theorists have identified common network motifs implemented in almost all cells \cite{shen2002network}. 
%
%
Two of these motifs are \textit{dense overlapping regulons} and \textit{single input modules}. Dense overlapping regulons have the ability to perform cellular decision making and have a network structure akin to that of a shallow neural network, with a number of input transcription factors (input TFs) mapping to a distinct set of genes \cite{alon2019introduction}. Single input modules contain a master TF (typically auto regulated) that simultaneously activates a number of target genes. The activation of these target genes typically enables the sequential production of proteins which combine together to form an enzyme. The final product enzyme may have multiple functions, its primary function may be related to performance in a particular environment. Typically, its secondary function is to deactivate the master TF \cite{alon2019introduction}. 

Within most observed cell sensory transcription networks, dense overlapping regulons are not layered in cascade with other dense overlapping regulons. This means that intracellular classification is not made by a deep network. Additionally, the master TF of single input modules are an output gene of a dense overlapping regulon \cite{shen2002network}. This is an example of the bowtie motif in biological networks \cite{csete2004bow, friedlander2015evolution}.

We build this bowtie network by combining models of transcriptional networks from \cite{alon2019introduction} and \cite{del2015biomolecular} to hypothesize a model for bacterial decision making. Given this model, we analyze its biological feasibility and basic capability for classification. Part of our interest in this motif stems from its network structure. Because all ambiguous components of our proposed model's network structure are feedforward; results from linear network reconstruction suggest that this model's topology and parameters could be uniquely identifiable from time-series experimental data \cite{gonccalves2008necessary, materassi2010topological}.

In this paper, we study a class of biological networks that have the ability to convert continuous environmental signals into discrete control policies, namely that are able to perform a mode of biological classification. We introduce this class of models as the DOR2SIM model and provide de novo mathematical definitions and analyze their architectural properties. We consider a 11-state model that is constructed as the minimal set of signal TFs needed to build a distinguishable DOR motif and the minimal set of target genes to make a distinguishable SIM motif. We show that our 11-state DOR2SIM model is capable of classifying any constant input signal from the environment and thus, prove by inductive logic, that DOR2SIM models are a powerful class of models for universally classifying all positive constant signals. We then provide a sufficient model parameter condition to verify the ability of any DOR2SIM model to classify constant inputs based off of an experimentally measured or simulated equilibrium point of the system. This parameter condition suggests that the behavior of source nodes in the DOR motif at equilibrium leads to successful classification. As a preliminary, we define dynamic classification.

\begin{figure}
    \centering
    \includegraphics[width=0.95\linewidth]{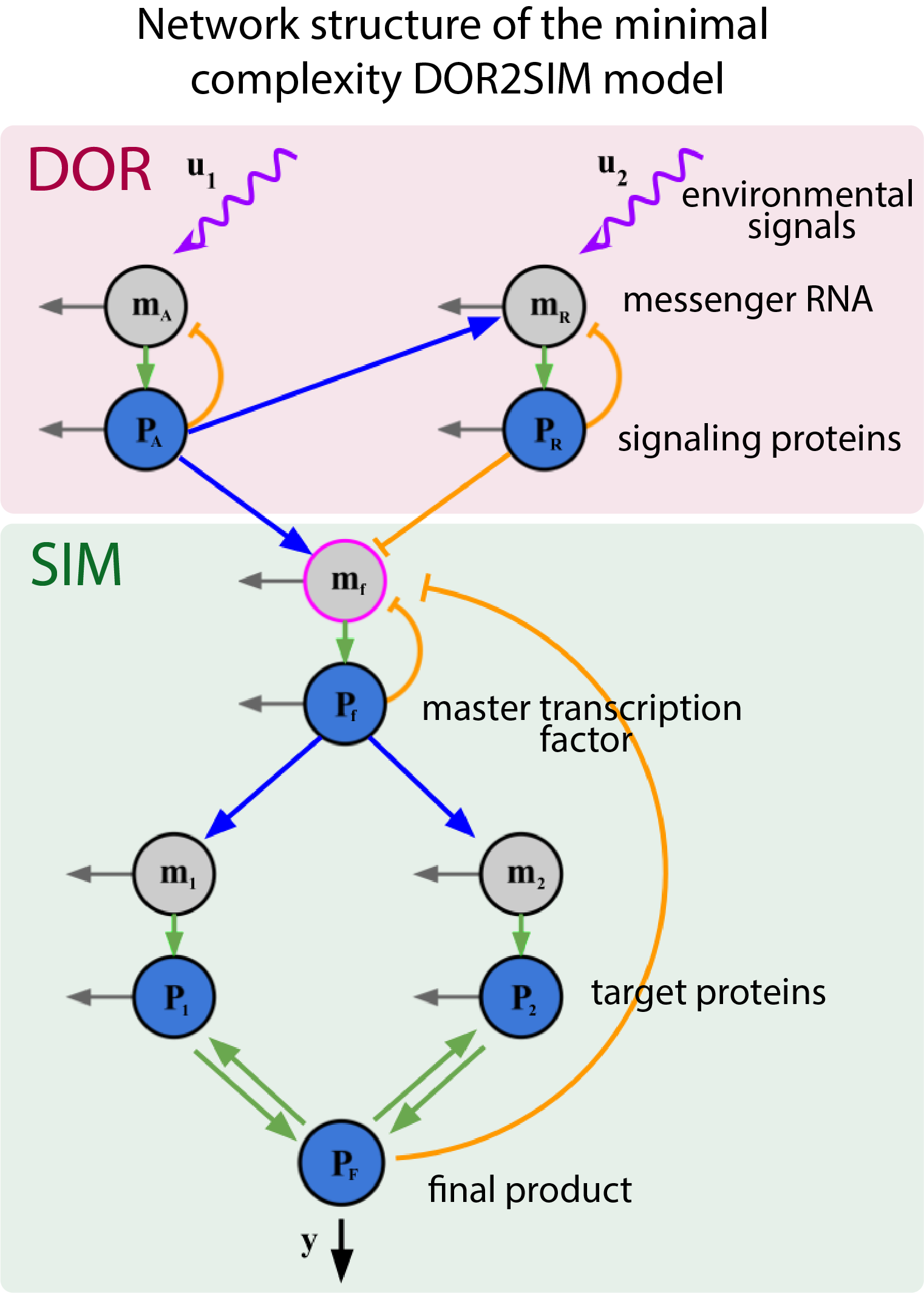}
    \caption{A diagram of the network structure of the DOR2SIM model. The dense overlapping regulon (DOR) and single input module (SIM) components are highlighted in red and green respectively. Nodes are labeled consistent with Equation (\ref{eq:toy_example}).}
    \label{fig:network}
\end{figure}

\section{Dynamic classification of inputs}

In this paper we consider the classification properties of a nonlinear dynamic system of the form 

\be \label{eq:dyn_sys}
\dot x = f(x, u),
\ee where $x\in\R^n, u\in\R^m$. We define classification as the act of sorting input to the system into distinct classes. These classes are distinguished from each other by the systems state variables. This information on the class of input should be stored in the system until the input changes, at which point it should reclassify the new input. This definition of the dynamic classification of inputs is given precisely below.

\begin{definition}\label{def:classification_sys}
The dynamic system in Equation (\ref{eq:dyn_sys}) is defined to classify a constant-valued input, $\bar u\in\R^m$ over the initial point set $S_{ic}\subset \R^n$, to the class set $S_c\subset \R^n$ if, for all initial points in the initial point set, $x_0\in S_{ic}$, we have that our constant valued input and the system dynamics evolved from our initial point, $\dot x = f(x, \bar u)$, $x(0)=x_0$, implies that there exists a threshold time, $T\in\R^+$, so that for any time greater than the threshold time the state is in the class set, $\exists\mbox{ } T>0$ so that for all $t>T$ $x(t)\in S_c$. See Figure \ref{fig:input_classification} for a visual schematic of this definition.
\end{definition}

\begin{figure}
    \centering
    \includegraphics[width=0.95\linewidth]{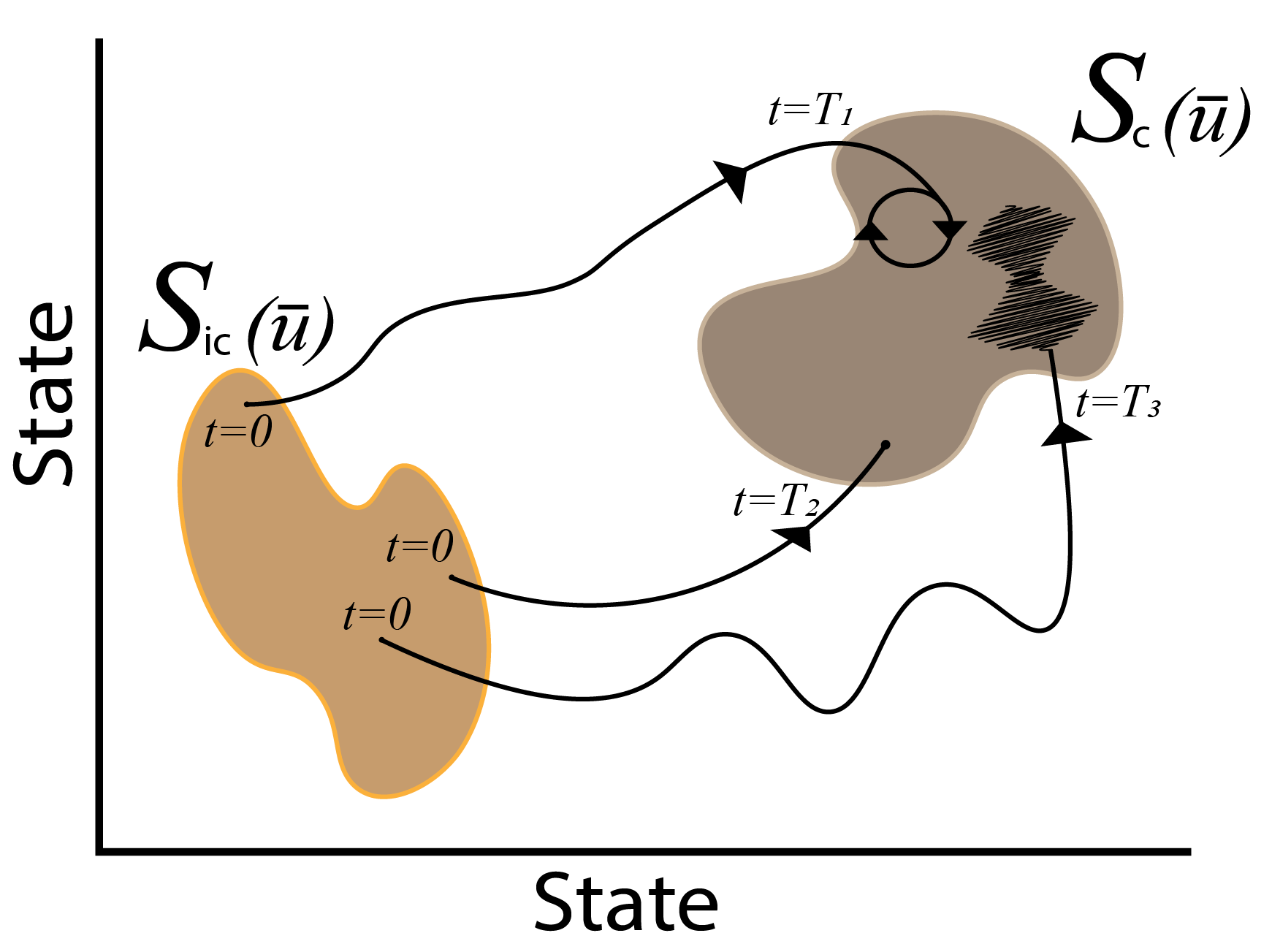}
    \caption{A visual schematic of Definition \ref{def:classification_sys}. Given a constant input, $\bar u$, trajectories that begin in $S_{ic}$ will reach and remain in $S_c$ in finite time.}
    \label{fig:input_classification}
\end{figure}

Cellular combinatorial decision making in transcriptional networks is performed in dense overlapping regulons (DORs) and then applied in (among other motifs) single input modules (SIMs). This means that classification models of DORs connecting to SIMs should be able to classify any constant-valued inputs to an appropriate class set. The definition of what set might be appropriate is open for discussion, however, in this paper ``appropriate'' sets will be simply connected sets. The most important feature in a dynamic system for classification is the ability to distinguish between class sets for different inputs. This means that, if a small perturbation is applied to the inputs, the intersection of the new class set with the class set of the unperturbed inputs should be the empty set.  We formalize this notion in the following definition of an input classifying system.

\begin{definition}
    The dynamic system in Equation (\ref{eq:dyn_sys}) is an input classifying system over an interval, $\mathcal{I}$, of $\R^m$ if it classifies all constant inputs in $\mathcal{I}$ to disjoint class sets. 
\end{definition}

Before we explore the classification potential of our model of the DOR2SIM network motif, we introduce the dynamics of the DOR and SIM models separately and then we combine them.

\section{Dense overlapping regulon model}

In a dense overlapping regulon (DOR), environmental signals influence the production of messenger RNA that codes to transcription factors (TFs) that will influence the downstream expression of genes that respond to the environmental signal.  This is the controller component of the DOR2SIM motif. In this model the dynamic states will be the concentrations of messenger RNA as well as the concentrations of the proteins that they code for. The dynamics can be given by a function, $f_{DOR}:\R^{2n}\times\R^n\rightarrow\R^{2n}$, and the outputs are the concentrations of the final TFs, $y_{DOR}:\R^{2n}\rightarrow\R^{n}.$ These functions are defined as follows 

\be
x_{DOR} &\triangleq [m_1, ..., m_n, P_1, ..., P_n]^T \\
\dot x_{DOR} &= f_{DOR}(x_{DOR}, u),\ee 
where:
\be 
\dot m_i &= F_{m_i}(P_i, P_{i+1}, ..., P_n) - \delta_{i}m_i + b_i u_i \\
\dot P_i &= k_{P_i}m_i - \delta_{n+i}P_i \\
y_{DOR} &= \BM P_1&P_2&\hdots&P_n \EM^T,
\ee 
 where each $\delta, b$ and $k$ is a constant parameter and each function $F_m$ is a special nonlinear function which describes the binding of the input proteins to the promoter site for the specified messenger RNA.

 So, what is the functional form of $F_{m_i}$, our nonlinear binding function? Well this depends on the type of promoter that we are assuming in our model. In this paper we assume that binding is competitive and only a single protein may bind at a time.
 
When the binding site only permits a single protein to bind at a time, the activation/repression dynamics in $F_{m_i}$ are given (as derived in Chapter 2 Section 2.3 of \cite{del2015biomolecular}) by:
\be\label{eq:competitive_binding}
F_{m_i}(P_i, P_{i+1},...,P_n) = \frac{\beta_{m_i}\sum_{j\in A}^{}(P_j/K_j)^{n_j}+\alpha_0}{1 + \sum_{j=i}^{n}(P_j/K_j)^{n_j}},
\ee where $A\subset \{i, i+1, ..., n\}$ is the set of all indices corresponding to activator proteins (the remaining indices correspond to repressor proteins), $\beta_{m_i}$ is a constant, $\alpha_0$ is a constant giving the rate of gene expression in the absence of activators on the promoter site and the parameters $K_j$ and $n_j$ are constants.

\section{Single input module model}
The single input module (SIM) is a motif in transcription networks where a single master TF, $P_M$, will activate the promoter sites for a number of other genes. These genes, in turn, produce proteins which combine with each other to form a single enzyme. In the DOR2SIM motif, this enzyme can perform multiple functions. Its primary purpose is typically to respond to the environmental signaling which activated the master transcription factor (master TF) in the first place. Additionally, it often down-regulates the master TF to turn off the system once the enzyme concentration is high enough.

The dynamics for the SIM will be given by the function $f_{SIM}:\R^{3N+1}\times\R^{n}\rightarrow\R^{3N+1}$. The output for the SIM is simply the final enzyme, so $y_{SIM}:\R^{3N+1}\rightarrow \R$. The system dynamics are given below as follows:

\be 
x_{SIM} &\triangleq [m_M, P_M, m_1, ..., m_N, P_1, ..., P_{2N-1}]^T \\
\dot x_{SIM} &= f_{SIM}(x_{SIM}, u),\ee 
where:
\be 
\dot m_M &= F_{M}(u_1, ..., u_n, P_M, P_{2N-1}) -\delta_{m_M}m_M\\
\dot P_M &= k_{M}m_M-\delta_{P_M}P_M\\
\dot m_{i\leq N} &= F_{m_i}(P_M) - \delta_{m_i}m_i \\
\dot P_1 &= k_1 m_1 + \delta_{P+{N+1}}P_{N+1} \\&\mbox{ }- k_{N+1}P_1P_2 -\delta_{P_1}P_1\\
\dot P_{2\leq i\leq N} &= k_i m_i + \delta_{P_{N+i-1}}P_{N+i-1} \\&\mbox{ } - k_{i+N-1}P_iP_{i-1} - \delta_{P_i}P_i\\
\dot P_{N+1} &= k_{N+1}P_1P_2 - \delta_{P_{N+1}}P_{N+1}\\
\dot P_{N+2\leq i} &= k_i P_{i-N+1}P_{i-1} - \delta_{P_i}P_i\\
y_{SIM} &= P_{2N-1},
\ee where each $\delta$ and $k$ is a constant corresponding to degradation rate and reaction rate respectively. The function $F_M$ and the functions $F_i$ are nonlinear functions describing the binding of input proteins to the promoter site for messenger RNA. 

Each function, $F_i$, takes the form of a single variable Hill activation function, $Hill$. In this paper we consider two types of Hill functions, Hill activation and Hill repressor functions. Hill activation functions are defined as follows: 
\be 
Hill_{act}(x) = \frac{\beta x^\mathbf{n}}{K^\mathbf{n} + x^\mathbf{n}}.
\ee Each $F_i$ is a Hill activation function. The function, $F_n$ in the DOR model is a Hill repressor function, defined as follows: 

\be
Hill_{rep}(x) = \frac{\beta K^\mathbf{n}}{K^\mathbf{n} + x^\mathbf{n}}.
\ee For both types of single-variable Hill functions the constants $\beta, K$ and $\mathbf{n}$ are function parameters.

On the other hand, the functional form of $F_M$ is more nuanced. Because of our underlying assumption that each promoter binding site only permits a single protein to bind at a time, $F_M$ will be modeled according to Equation (\ref{eq:competitive_binding}), much like the $F_m$ functions in the DOR model.

\section{The combined DOR2SIM model}

If we interconnect a DOR with a SIM, ignoring feedback due to the final product of the SIM influencing the signals that stimulate the expression of the input TFs, we have an open-loop model of the system. This model is as follows:

\be 
\dot x_{DOR} &= f_{DOR}(x_{DOR}, u)\\
\dot x_{SIM} &= f_{SIM}(x_{SIM}, y_{DOR}(x_{DOR}))\\
y &= y_{SIM}(x_{SIM}).
\ee

\subsection{The minimal complexity DOR2SIM model}

To study the dynamic classification properties of the DOR2SIM model class we will begin by proposing a minimal complexity instantiation of the DOR2SIM that retains all of the dynamic features of the model. This means that our model will include: \begin{enumerate}
    \item a feedforward connection between the signaling TFs
    \item negative auto regulation in both the signaling and master TFs as well as
    \item a higher order protein that is not coded for directly in the organism's DNA.
\end{enumerate} Complexity here is structural complexity and is measured as the number of signaling TFs in the DOR component and the number of target TFs in the SIM component. The minimal complexity DOR2SIM model will have only two signaling TFs in its DOR component and two target TFs in its SIM component. If we structurally simplify the model further by removing a signaling TF from the DOR or a target TF from the SIM, then the DOR and SIM motifs, respectively, would not be distinguishable from a cascade motif and so we no longer have a DOR2SIM model. Thus there are no further structural simplifications possible for this model.

However, we will operate under a number of non-structural simplifying assumptions. We will assume that 
\begin{enumerate}
    \item protein degradation rates for all monomer proteins are equal,
    \item translation rates for all monomer proteins are equal, 
    \item input signals are unweighted ($b_1=b_2=1$), 
    \item all promoters maximal production rate is 1 ($\beta=1$ in Hill functions),
    \item all promoters activate (or repress) at the promoter site after they form a dimer (which we model by setting the Hill coefficients to be 2 assuming high cooperativity \cite{santillan2008use}) and 
    \item half maximal repression for the source messenger RNA is reached when the concentration of the protein it codes for is at 1, this means that in Equation (\ref{eq:toy_example}): 
    \be 
        F_1(x_3)\triangleq \frac{\beta}{1 + x_3^2},
    \ee for some positive constant $\beta$.
\end{enumerate}

Ultimately, our minimal complexity DOR2SIM model is as follows:

\be \label{eq:toy_example}
\BM \dot m_a\\ \dot m_r \\ \dot P_a \\ \dot P_r \\ \dot m_f \\ \dot m_1 \\ \dot m_2 \\ \dot P_f \\ \dot P_1 \\ \dot P_2 \\ \dot P_F \EM \triangleq \BM \dot x_1\\ \dot x_2 \\ \dot x_3 \\ \dot x_4 \\ \dot x_5 \\ \dot x_6 \\ \dot x_7 \\ \dot x_8 \\ \dot x_9 \\ \dot x_{10} \\ \dot x_{11} \EM  = \BM F_1(x_3) - \delta x_1 + u_1 \\ F_2(x_3, x_4) - \delta x_2 + u_2 \\ k x_1 - \delta x_3 \\ k x_2 - \delta x_4 \\ F_3(x_3, x_4, x_8, x_{11})-\delta x_5 \\ F_4(x_8) - \delta x_6\\ F_5(x_8) \delta x_7 \\ kx_5 - \delta x_8 \\ kx_6 - \delta x_9 + \Delta x_{11} - \bar k x_9x_{10} \\ kx_7 - \delta x_{10} + \Delta x_{11} - \bar k x_9x_{10} \\ \bar k x_9x_{10} - \Delta x_{11} \EM, 
\ee where the parameters $\delta, \Delta, k, \bar k$ are each positive constants. A visual depiction of the network structure of this model is in Figure \ref{fig:network}.

\section{Classification properties of the DOR2SIM model} 
Our numerical exploration of this model indicates that, like other biological models arising from mass action kinetics, the DOR2SIM model is highly stable and that the equilibrium points vary with changes in input values, see Figure \ref{fig:demonstration}. These are encouraging results for the capacity of this model to perform classification. However, the nonlinear dynamics of these systems allow for these systems to potentially have multi-stability induced hysteresis. Multi-stability introduces a dependence between the ultimate state of the system on initial conditions. This dependence is problematic for classifying inputs since the same input could potentially, given the initial state of the system, map to distinct class sets. Thus, we delve into a mathematical analysis of the DOR2SIM model in order to clarify when it can model dynamic input classification in a biological context.

\begin{figure}
    \centering
    \includegraphics[width=1.08\linewidth]{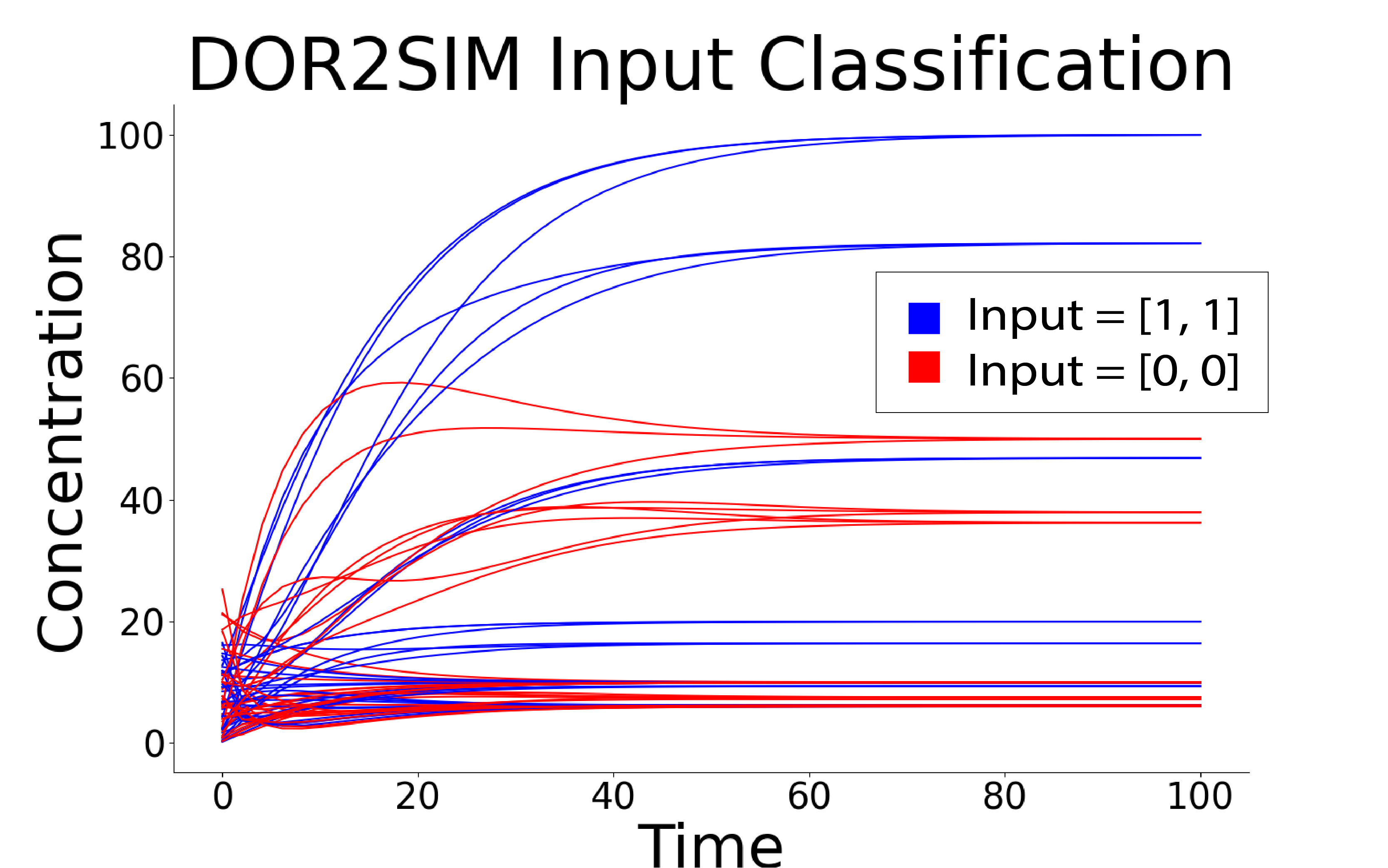}
    \caption{The evolution, in time, of multiple state trajectories from random initial conditions given distinct inputs for identical DOR2SIM models. Note that the equilibrium points depend on the input, rather than initial conditions demonstrating that this network classifies distinct inputs with with its state. The lines appear to overlap because we are plotting all 11 state trajectories vs time on the same axis. If it were an 11 dimensional graphic they would not overlap.}
    \label{fig:demonstration}
\end{figure}

\subsection{Defining biologically feasible initial conditions, inputs and class sets}

Recall from Definition \ref{def:classification_sys} that the DOR2SIM model will only be able to classify inputs from a set of initial conditions to a class set if any initial condition combined with the specified input results in a system trajectory that will reach the class set and remain in the class set in finite time. As such, our set of initial conditions will be the set of all biologically feasible initial conditions. Since each state variable in our system is the concentration of some compound, be it a protein or messenger RNA, biologically feasible initial conditions must be nonnegative and finite. So our biologically feasible initial conditions belong to the set: 

\be 
S_{ic} \triangleq \{x\in\R^{11}:x_i \geq 0, \forall i=1,2,...,11\}.
\ee 

Our inputs and class sets likewise must be nonnegative in order to be biologically feasible.

\subsection{Sufficient conditions for the 11-state DOR2SIM model to classify any biologically feasible input}

\begin{theorem}\label{thm:toy_model}
    Given the minimal complexity DOR2SIM model and assuming that every biologically feasible initial condition will lie in the bastion of attraction of some stable equilibrium point, if the monomer degradation rate satisfies, \be \label{eq:thm_toy_eq} \left(\frac{k^2}{3}u_1^2\right)^{\frac{1}{4}}<\delta<\frac{\sqrt{6}}{6}\approx 0.408,\ee then the model will classify the constant input $\bar u\triangleq [u_1, u_2]$ over the set of all biologically feasible initial conditions, $S_{ic}$, to the class set $S_c$ where \be S_c\triangleq \{x\in S_{ic}: x_1=m_a^{eq}, x_3=\frac{k}{\delta}m_a^{eq}\}, \ee for some $m_a^{eq}\in\R^+$. 

    Furthermore, the minimal complexity DOR2SIM model is an input classifying system over some sufficiently small intervals of the set of biologically feasible inputs that satisfies Equation (\ref{eq:thm_toy_eq}).
\end{theorem}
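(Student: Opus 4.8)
The plan is to reduce the required disjointness of class sets to an injectivity property of the equilibrium map $u_1 \mapsto m_a^{eq}$, taking advantage of the fact that the class set $S_c$ depends on the input only through this one equilibrium value. First I would note that the coordinates $(x_1, x_3)$ form a closed two-dimensional subsystem of (\ref{eq:toy_example}): since $F_1$ is a function of $x_3$ alone and $\dot x_3 = kx_1 - \delta x_3$, the pair $(x_1,x_3)$ evolves autonomously, decoupled from the other nine states and driven only by $u_1$. The class set in the first part of the theorem is exactly the fiber $\{x_1 = m_a^{eq},\ x_3 = (k/\delta)m_a^{eq}\}$ determined by the equilibrium of this subsystem, so $S_c$ is fixed once $m_a^{eq}$ is, and $m_a^{eq}$ is a function of $u_1$ only.

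Next I would make this dependence explicit. Imposing $\dot x_1 = \dot x_3 = 0$ gives $x_3 = (k/\delta)m_a^{eq}$ together with the scalar balance
\[
u_1 = \delta\,m_a^{eq} - \frac{\beta}{1 + (k/\delta)^2 (m_a^{eq})^2}.
\]
Differentiating the right-hand side in $m_a^{eq}$ gives
\[
\frac{du_1}{d m_a^{eq}} = \delta + \frac{2\beta (k/\delta)^2\, m_a^{eq}}{\bigl(1 + (k/\delta)^2 (m_a^{eq})^2\bigr)^2} > 0
\]
for every $m_a^{eq}\ge 0$, because $\delta,\beta,k>0$. Thus $u_1$ is strictly increasing in $m_a^{eq}$ on $[0,\infty)$, which shows both that $m_a^{eq}$ is uniquely pinned down by $u_1$ and that the map $u_1 \mapsto m_a^{eq}$ is strictly monotone, hence injective. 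Two inputs with $u_1 \ne u_1'$ therefore yield distinct equilibria $m_a^{eq}\ne(m_a^{eq})'$, and the associated class sets, which fix $x_1$ at different levels, are disjoint.

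To assemble the ``Furthermore'' claim I would fix a degradation rate $\delta\in(0,\sqrt 6/6)$ and take $\mathcal I$ to be any interval of inputs whose $u_1$-values satisfy $\sup_{\mathcal I} u_1 < \sqrt 3\,\delta^2/k$. The lower bound in (\ref{eq:thm_toy_eq}) is precisely $(k^2u_1^2/3)^{1/4}<\delta$, i.e.\ $u_1 < \sqrt 3\,\delta^2/k$, so every input in such an $\mathcal I$ satisfies (\ref{eq:thm_toy_eq}) with this common $\delta$; the first part of the theorem then classifies each input in $\mathcal I$ to its single class set, and the injectivity above makes these sets pairwise disjoint. This is exactly where \emph{sufficiently small} enters: the interval must be short enough that one fixed $\delta$ below $\sqrt 6/6$ dominates $(k^2u_1^2/3)^{1/4}$ uniformly over $\mathcal I$.

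I expect the main obstacle to be the mismatch between the two-dimensional input $\bar u = [u_1,u_2]$ and the one-dimensional class set, which sees only $u_1$: inputs differing solely in $u_2$ share the same $m_a^{eq}$ and hence the same class set, so disjointness can hold only along directions in which $u_1$ varies. I would resolve this by taking $\mathcal I$ to be a segment along which $u_1$ is injective (for instance parallel to the $u_1$-axis), guaranteeing distinct $u_1$ for distinct inputs. Confirming that this is the intended reading of ``interval of $\R^m$'' for this model --- and, if a genuinely two-dimensional classifying region is wanted, augmenting $S_c$ with an equilibrium constraint on the $u_2$-driven states $x_2$ or $x_4$ --- is the step that requires the most care.
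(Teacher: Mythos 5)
Your proposal is correct, and it reaches the theorem by a genuinely different and more elementary route than the paper. The paper clears denominators to get the cubic $-\delta c^2 m^3 + u_1 c^2 m^2 - \delta m + (1+u_1)=0$ for $m = x_1^{eq}$, applies Descartes' rule of signs, and then runs a Sturm-sequence computation; the requirement that both Sturm signs $\varsigma_1,\varsigma_2$ be positive is exactly what generates the two bounds in Equation (\ref{eq:thm_toy_eq}). You never leave the rational form: writing the equilibrium balance as $u_1 = \delta m - \beta/\bigl(1+(k/\delta)^2m^2\bigr)$ and observing that the right-hand side has derivative $\delta + 2\beta(k/\delta)^2 m/\bigl(1+(k/\delta)^2m^2\bigr)^2 > 0$ for $m\ge 0$, you conclude it is strictly increasing on $[0,\infty)$, hence for each $u_1\ge 0$ there is exactly one positive $m_a^{eq}$ and the map $u_1\mapsto m_a^{eq}$ is globally injective. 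This buys two strengthenings over the paper. First, uniqueness of the positive equilibrium holds for \emph{all} positive $\delta, k, u_1$: your argument never uses the hypothesis (\ref{eq:thm_toy_eq}), so the paper's parameter condition is revealed to be sufficient but not necessary (consistent with this, in the paper's own sign table the cases with $\varsigma_2<0$ are incompatible with Descartes' count of zero negative roots, so they cannot occur); what the paper's heavier machinery buys is simply the explicit bounds that appear in the theorem statement. Second, your monotonicity is global, so disjointness of class sets holds on any interval along which $u_1$ varies, improving the paper's local argument via smooth dependence away from critical points. Your closing concern about the $u_2$-direction is also well placed: since $S_c$ constrains only $x_1$ and $x_3$, inputs differing solely in $u_2$ produce identical (hence non-disjoint) class sets, a degeneracy the paper's proof silently skips; restricting to segments on which $u_1$ separates inputs is the right repair. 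One caveat applies equally to you and to the paper: both arguments show that every equilibrium lies in $S_c$ and then invoke the assumed convergence to a stable equilibrium, but Definition \ref{def:classification_sys} demands $x(t)\in S_c$ for all $t>T$, and asymptotic convergence to a point of the measure-zero fiber $S_c$ does not by itself place the trajectory inside $S_c$ at any finite time; since the paper accepts this step, it is not a gap specific to your proposal.
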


\begin{proof}
In this proof our objective will be to demonstrate that any equilibrium point of this system will belong to the class set $S_c$. We begin by noting that at equilibrium, \be
0 = kx_1^{eq} - \delta x_3^{eq}\implies x_3^{eq}=\frac{k}{\delta}x_1^{eq}=\frac{k}{\delta}m_a^{eq}.
\ee  We also have that \be 
0 = \frac{1}{1+c^2{x_1^{eq}}^2} - \delta x_1^{eq} + u_1,
\ee where we define $c\triangleq\frac{k}{\delta}$, to simplify notation. Multiplying both sides by $1+c^2{x_1^{eq}}^2$ and combining like terms we obtain the following equivalent statement: \be \label{eq:toy_eq_req}
0=-\delta c^2 {x_1^{eq}}^3 + u_1c^2{x_1^{eq}}^2-\delta{x_1^{eq}} + (1 + u_1).
\ee By Descartes rule of signs this polynomial will have 3 or 1 positive root and 0 negative roots. This means that there will be 0 or 2 complex roots. We will apply Sturm's Theorem to determine conditions where there will be exactly 1 positive root and 2 complex roots. To do this we build a Sturm sequence and verify the differences in the sign changes in the Sturm sequence at ${x_1^{eq}}=0$ and as ${x_1^{eq}}\rightarrow \pm\infty$. 

When we do so we have that the sequence signs are: \be 
(+, -, -\varsigma_1, \varsigma_2)&\mbox{ as } {x_1^{eq}}\rightarrow-\infty \\
(+, -, -, \varsigma_2)&\mbox{ when } {x_1^{eq}}=0 \\
(-, -, \varsigma_1, \varsigma_2)&\mbox{ as } {x_1^{eq}}\rightarrow \infty,
\ee where 
\be \label{eq:thm1_conditions}
\varsigma_1 &= sign(\frac{-2c^2u_1^2+6\delta^2}{9\delta}) = sign(3\delta^2-c^2u_1^2)\\
\varsigma_2 &= sign(\frac{P_{pos}(c, u_1, \delta)}{4c^2u_1^2-24c^2\delta^2u_1^2+36\delta^4}) \\&= sign(c^2u_1^2-6c^2u_1^2\delta^2+9\delta^4),
\ee where $P_{pos}$ is a $5^{th}$-order polynomial in $c,u_1$ and $\delta$ with all positive coefficients. In order to have zero real negative roots and one real positive root, Sturm's Theorem implies that $\varsigma_1$ and $\varsigma_2$ both be positive. For $\varsigma_1$ to be positive: \be 
0 < 3\delta^2 - c^2u_1^2  \implies   c^2u_1^2 < 3\delta^2 \implies \frac{k^2}{3}u_1^2 < \delta^4. 
\ee Since all the involved terms are positive we can maintain this inequality when we take the cubed root of both sides of the final inequality to get the left half of Equation (\ref{eq:thm_toy_eq}).

To conclude our proof, when $\varsigma_1$ is positive we will show that $\varsigma_2$ will also be positive if $\delta < \frac{\sqrt{6}}{6}$. First, when $\varsigma_1$ is positive we then have that, for some $\varepsilon>0$, that \be 
3\delta^2 = c^2u_1^2 + \varepsilon \implies 3\delta^2 - \varepsilon  = c^2u_1^2.
\ee  This means that we can simplify our term related to $\varsigma_2$ in Equation (\ref{eq:thm1_conditions}) as follows: \be
c^2u_1^2-6c^2u_1^2\delta^2+9\delta^4 &= (3\delta^2 - \varepsilon)-6(3\delta^2 - \varepsilon)\delta^2+9\delta^4 \\
&=-9\delta^4 + (3 + 6\varepsilon)\delta^2 - \varepsilon.
\ee This polynomial is positive (as can be verified using the quadratic formula on with the substitution $z \triangleq \delta^2$ to find zeros and then checking intervals) whenever: \be 
\frac{1 + 2\varepsilon}{6} - \frac{\sqrt{36\varepsilon^2+9}}{18} < \delta^2 < \frac{1 + 2\varepsilon}{6} + \frac{\sqrt{36\varepsilon^2+9}}{18}.
\ee At this point we can substitute \be 
\varepsilon = 3\delta^2 - c^2u_1^2
\ee into the above inequality and simplify to get: \be 
9\delta^4 - 6c^2u_1^2\delta^2+c^2u_1^2 > 0.
\ee We then recall that $c\triangleq\frac{k}{\delta}$, so we substitute and simplify again to get: \be 
9\delta^6 - 6k^2u_1^2\delta^2+k^2u_1^2 > 0.
\ee We then solve for $u_1^2$ and we get \be 
u_1^2 > \frac{-9\delta^6}{k^2(1 - 6\delta^2)}.
\ee  We then note that whenever $1-6\delta^2 > 0$, that $u_1$ satisfies this condition for all positive reaction rates, $k$. This finally brings us to the right side of Equation (\ref{eq:thm_toy_eq}) as follows, \be 
0< 1-6\delta^2 \implies 6\delta^2 < 1\implies \delta < \frac{1}{\sqrt{6}} = \frac{\sqrt{6}}{6}.
\ee The final claim of the theorem is validated by $x_1^{eq}$'s smooth dependence on $u_1$ in Equation (\ref{eq:toy_eq_req}), so the relationship between $x_1^{eq}$ and $u_1$ is bijective on a small enough interval that does not contain a local minimum or maximum. 
\end{proof}

The proof of Theorem \ref{thm:toy_model} demonstrates that the positive constant $m_a^{eq}$ is the concentration of the messenger RNA for the source node in the DOR's feedforward component of the DOR2SIM model. This same proof also demonstrates that $m_a^{eq}$ is a function of the system parameters and the input to the same source node, $u_1$. 

Theorem \ref{thm:toy_model} implies that as long as the TFs and messenger RNA of the system do not degrade too rapidly with respect to the timescale of the transcription dynamics that this model will successfully classify sufficiently small, constant inputs. Theorem \ref{thm:toy_model} is interesting in that monomer degradation rates are lower in cells that divide more slowly and previous research suggests a tradeoff between cell growth rate and environmental adaptation \cite{lopez2008tuning}. Theorem \ref{thm:toy_model} is also encouraging as the negative auto regulation built into this model promotes a quick response time \cite{alon2019introduction} and therefore the degradation parameter should be small by comparison. So this model should, under reasonable assumptions, successfully model dynamic classification for small enough inputs.

\section{Using data to determine if higher-order DOR2SIM models classify}

Numerical integration or experimental measurements of gene activity in a cell culture can be used to estimate a stable equilibrium point of the DOR2SIM model. However, the existence of such an equilibrium point is not sufficient to establish a DOR2SIM model's ability to classify the constant input used to find that equilibrium point. It is possible for biological systems to showcase hysteresis \cite{angeli2004detection}. This means that, although the measured or simulated trajectory (or trajectories) indicate a particular state that the system will map the input to, different initial conditions may be in the bastion of attraction for a different attracting set.  Given a stable equilibrium point we provide sufficient results to verify that the DOR2SIM system will map all biologically viable initial conditions to the generalization of the class set, $S_c$, defined in Theorem \ref{thm:toy_model}. As before, $S_c$, will be defined in terms of the concentration of the source messenger RNA in the feedforward component of the DOR within the DOR2SIM model.

In this section, we retain a few of the our previous modeling assumptions, specifically, we assume that \begin{enumerate}
    \item all promoters activate (or repress) at the promoter site after they form a dimer (so all Hill coefficients are set to 2) and 
    \item half maximal suppression for the source messenger RNA is reached when the concentration of the protein it codes for is at 1.
\end{enumerate}

\begin{theorem}\label{thm:full_model}
    Given a DOR2SIM system with a biologically feasible constant input, $\bar u = [u_1,...,u_n]$, resulting in a biologically feasible equilibrium point at $x^{eq}\in\R^{2n+3N+1}$ assume that every biologically feasible initial condition will lie in the bastion of attraction of some stable equilibrium point and that the following condition is satisfied: \be\label{eq:thm_full_eq}
    k_{P_n}^2 x_n^{eq}(\frac{b_nu_n}{\delta_{2n}^2} - x_n^{eq}\delta_n)^2  - 4\delta_n\delta_{2n}^2(\beta_n+b_nu_n) < 0,
    \ee where $x^{eq}_n$ is the concentration of the source messenger RNA in the feedforward dynamics of the DOR at the equilibrium point, $x^{eq}$, and where $k_{P_n}, \delta_n, \delta_{2n}, b_n, \beta_n$ are all constant parameters in the DOR2SIM model. Then the entire DOR2SIM system will classify all biologically feasible initial conditions to the class set: \be 
    S_c\triangleq\{x\in\R^{2n+3N+1}:x_n=x^{eq}_n, x_{2n}=\frac{k_{P_n}}{\delta_{2n}}x^{eq}_n\}.
    \ee 

    Furthermore, this DOR2SIM model is an input classifying system over some sufficiently small intervals of the set of biologically feasible inputs that satisfies Equation (\ref{eq:thm_full_eq}).
\end{theorem}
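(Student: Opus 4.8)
The plan is to exploit the purely feedforward (open-loop) structure of the DOR2SIM model to collapse the entire $(2n+3N+1)$-dimensional problem onto a single decoupled two-dimensional subsystem, in exactly the way the proof of Theorem \ref{thm:toy_model} reduced the 11-state model to the dynamics of $m_a$ and $P_a$. First I would single out the source messenger RNA and its protein, $(x_n,x_{2n})=(m_n,P_n)$, which obey
\be
\dot m_n &= F_{m_n}(P_n) - \delta_n m_n + b_n u_n \\
\dot P_n &= k_{P_n} m_n - \delta_{2n} P_n,
\ee
and observe that, because the interconnection of the DOR and SIM is purely feedforward in the open-loop model and $F_{m_n}$ is a Hill repressor depending only on $P_n$, no other state variable appears on these right-hand sides. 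Thus $(m_n,P_n)$ form a closed subsystem whose equilibrium values are fixed \emph{independently} of $n$, $N$, and everything downstream.

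Next I would compute those equilibrium values. Setting $\dot P_n=0$ gives $P_n^{eq}=(k_{P_n}/\delta_{2n})\,m_n^{eq}$, which is precisely the second defining relation of $S_c$. Substituting this into $\dot m_n=0$ with the repressor form $F_{m_n}(P_n)=\beta_n/(1+P_n^2)$ (Hill coefficient $2$ and half-maximal repression at $P_n=1$, per the retained assumptions) and clearing the denominator produces a cubic in $m_n^{eq}$ with the same sign pattern as Equation (\ref{eq:toy_eq_req}). By Descartes' rule of signs this cubic has no negative roots and either one or three positive roots, and $m_n=0$ is never a root since the constant term $\beta_n+b_nu_n$ is strictly positive.

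The central step is to show that the \emph{measured} equilibrium value $x_n^{eq}$ is the unique positive root, and this is where condition (\ref{eq:thm_full_eq}) enters. Since $x_n^{eq}$ is by hypothesis a biologically feasible (hence positive) root, I would divide the cubic by $(m_n-x_n^{eq})$ to obtain a quadratic cofactor and then require that cofactor to have no real roots. Because the full cubic has no negative roots, any real root of the cofactor would itself be a second nonnegative equilibrium of the source subsystem; hence demanding a strictly negative discriminant of the cofactor forces its two roots to be complex conjugates and leaves $x_n^{eq}$ as the only positive equilibrium. Carrying out the polynomial division and imposing negativity of the discriminant is intended to reproduce inequality (\ref{eq:thm_full_eq}). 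I expect this discriminant computation to be the main obstacle: both bookkeeping the parameters $k_{P_n},\delta_n,\delta_{2n},b_n,\beta_n$ correctly through the division, and confirming that the resulting sign condition matches (\ref{eq:thm_full_eq}), will require care.

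Finally I would assemble the conclusion. Every equilibrium of the full DOR2SIM system, projected onto $(x_n,x_{2n})$, must satisfy the decoupled subsystem's equilibrium equations; since those have the unique solution $(x_n^{eq},(k_{P_n}/\delta_{2n})x_n^{eq})$, \emph{all} full-system equilibria share these two coordinate values and therefore lie in $S_c$. Invoking the standing assumption that every biologically feasible initial condition lies in the basin of attraction of some stable equilibrium, every trajectory from $S_{ic}$ converges to an equilibrium and hence into $S_c$, verifying classification in the sense of Definition \ref{def:classification_sys}. For the input-classifying claim, the cubic defines $x_n^{eq}$ as a smooth implicit function of the source input $u_n$; on a sufficiently small interval where this function is strictly monotone (no local extremum), distinct inputs yield distinct $x_n^{eq}$ and thus class sets that differ in their fixed $x_n$-value, making them disjoint as required.
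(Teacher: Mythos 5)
Your proposal follows essentially the same route as the paper's own proof: the equilibrium relation $x_{2n}=\frac{k_{P_n}}{\delta_{2n}}x_n$, the cubic in $x_n$ from the source-node equation, Descartes' rule of signs, division by $(x_n-x_n^{eq})$ to obtain a quadratic cofactor, and the negative-discriminant condition identified with Equation (\ref{eq:thm_full_eq}), followed by the basin-of-attraction assumption to conclude classification and smooth dependence of $x_n^{eq}$ on $u_n$ for the input-classifying claim. Your explicit remarks that the feedforward structure makes $(m_n,P_n)$ a closed subsystem, and that any real root of the cofactor would necessarily be a second \emph{positive} equilibrium (since the cubic has no negative roots and zero is not a root), are minor but sound refinements of the same argument.
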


\begin{proof}
  We begin this proof in a form analogous to that of Theorem \ref{thm:toy_model} and find that at equilibrium: 
  \be 
0 = k_{P_n}x_{n} - \delta_{2n} x_{2n}\implies x_{2n}=\frac{k_{P_n}}{\delta_{2n}}x_n.
  \ee We need to show that the $n^{th}$ state at equilibrium must equal the $n^{th}$ state of our known equilibrium point, specifically, that $x_n=x_n^{eq}$. 

  Again, as in Theorem \ref{thm:toy_model}, we define a convenience variable, $c\triangleq\frac{k_{P_n}}{\delta_{2n}}$ and derive another required condition at equilibrium point from the state-evolution of $x_n$: \be \label{eq:poly_cond}
0 = -\delta_nc^2x_n^3 + b_nu_nc^2x_n^2-\delta_nx_n + (\beta_n+b_nu_n),
  \ee where $x_n$ is the concentration of the $n^{th}$ messenger RNA at equilibrium. Since all biologically feasible parameters and inputs to this system are positive Descartes rule of signs gives that this polynomial has 0 negative real solutions and 3 or 1 positive real solutions. We know that one positive real solution to this polynomial is $x_n=x_n^{eq}$. Therefore, to satisfy this equation with only one real solution, we must have that Equation (\ref{eq:poly_cond}) be writable as: \be
0  = (x_n - x_n^{eq})(\alpha x_n^2 + \beta x_n + \gamma),
  \ee with a negative discriminant: \be\label{eq:discriminant_cond} \beta^2 - 4\alpha\gamma < 0.\ee We then divide out $x_n - x_N^{eq}$ from Equation (\ref{eq:poly_cond}) and equate the coefficients to $\alpha, \beta$ and $\gamma$ to compute that: \be 
\alpha &= -\delta_nc^2\\
\beta &= b_n u_n c^2 - x_n^{eq}\delta_nc^2\\
\gamma &= -(\beta_n+b_n u_n) / x_n^{eq}. 
  \ee  Thus our negative discriminant condition, Equation (\ref{eq:discriminant_cond}), becomes Equation (\ref{eq:thm_full_eq}) and it implies that $x_n^{eq}$ is the only real value for $x_n$ which allows the system state $x$ to be at equilibrium. Similar to the proof of Theorem \ref{thm:toy_model}, we note the smooth dependence of $x_n^{eq}$ on $u_n$ in Equation (\ref{eq:poly_cond}) to justify the final statement of the theorem. 
\end{proof}

Theorem \ref{thm:full_model} shows that we have found an architecture (the DOR2SIM model) that universally classifies biologically feasible environments under mathematical conditions that coincide a relatively strong activation of the source node in the DOR component compared to its concentration of messenger RNA at equilibrium ($\beta_n+b_nu_n$ large compared to $x_n^{eq}$).  This suggests that the smaller the concentration of messenger RNA from the source node gene needed to keep the cell at equilibrium in one environment, the greater capacity that the cell has to classify based on smaller environmental signals and in spite of a less efficient promoter site. So this model suggests that the messenger RNA from source nodes in DOR2SIM motifs of highly adaptable bacteria will settle to low concentrations.

This theorem also provides a quick check to verify the ability of a particular DOR2SIM model to classify constant inputs of interest.  This check could be used in conjunction with numerical simulations or a sequence of experiments that independently tweak the input values to look for possible limits of the system's ability to classify inputs.

\section{Conclusion}

In order to survive, bacteria and other cells need to rely on decision making that is computed via networks of chemical reactions that interplay with the central dogma of biology: the transcription and translation of messenger RNA. This means that these cells need to be able to classify chemical and other signals from their environments through these dynamics. We have combined typical biological motifs to propose the DOR2SIM model, a specific model for biological classification and response to environmental signals. 

Our numerical analysis shows that this model is stable, consistent with experimentally measured transcriptomics and is capable of dynamic input classification. We have established a straightforward condition under which the minimal complexity DOR2SIM model is guaranteed to perform input classification. We have also established a sufficient condition to check the classification ability of an arbitrary DOR2SIM model receiving an arbitrary biologically consistent input. This condition is a simple computation that is made from the model parameters as well as an equilibrium point, which can be discovered through numerical simulation or from experimental data collection. The condition suggests that successful classification in these networks (and by extension, successful adaptation of cells to new environments) is tied to a maintaining a low concentration of messenger RNA from source nodes in the DOR at equilibrium. In future work we hope to leverage the network structure of the DOR2SIM model in order to identify this motif and dynamics in adaptable bacteria. We also hope to integrate the physical dynamics of DNA supercoiling into the model to understand the role that supercoiling plays in cell response to environmental stimulus.

\bibliography{root}

\begin{thebibliography}{10}
\providecommand{\url}[1]{#1}
\csname url@rmstyle\endcsname
\providecommand{\newblock}{\relax}
\providecommand{\bibinfo}[2]{#2}
\providecommand\BIBentrySTDinterwordspacing{\spaceskip=0pt\relax}
\providecommand\BIBentryALTinterwordstretchfactor{4}
\providecommand\BIBentryALTinterwordspacing{\spaceskip=\fontdimen2\font plus
\BIBentryALTinterwordstretchfactor\fontdimen3\font minus \fontdimen4\font\relax}
\providecommand\BIBforeignlanguage[2]{{%
\expandafter\ifx\csname l@#1\endcsname\relax
\typeout{** WARNING: IEEEtran.bst: No hyphenation pattern has been}%
\typeout{** loaded for the language `#1'. Using the pattern for}%
\typeout{** the default language instead.}%
\else
\language=\csname l@#1\endcsname
\fi
#2}}

\bibitem{jacob1961genetic}
F.~Jacob and J.~Monod, ``Genetic regulatory mechanisms in the synthesis of proteins,'' \emph{Journal of molecular biology}, vol.~3, no.~3, pp. 318--356, 1961.

\bibitem{lopez2008tuning}
L.~L{\'o}pez-Maury, S.~Marguerat, and J.~B{\"a}hler, ``Tuning gene expression to changing environments: from rapid responses to evolutionary adaptation,'' \emph{Nature Reviews Genetics}, vol.~9, no.~8, pp. 583--593, 2008.

\bibitem{shen2002network}
S.~S. Shen-Orr, R.~Milo, S.~Mangan, and U.~Alon, ``Network motifs in the transcriptional regulation network of escherichia coli,'' \emph{Nature genetics}, vol.~31, no.~1, pp. 64--68, 2002.

\bibitem{alon2019introduction}
U.~Alon, \emph{An introduction to systems biology: design principles of biological circuits}.\hskip 1em plus 0.5em minus 0.4em\relax Chapman and Hall/CRC, 2019, \textbf{Note:} We used models and information from chapters 1, 2 and 4 in this paper.

\bibitem{csete2004bow}
M.~Csete and J.~Doyle, ``Bow ties, metabolism and disease,'' \emph{TRENDS in Biotechnology}, vol.~22, no.~9, pp. 446--450, 2004.

\bibitem{friedlander2015evolution}
T.~Friedlander, A.~E. Mayo, T.~Tlusty, and U.~Alon, ``Evolution of bow-tie architectures in biology,'' \emph{PLoS computational biology}, vol.~11, no.~3, p. e1004055, 2015.

\bibitem{del2015biomolecular}
D.~Del~Vecchio and R.~M. Murray, \emph{Biomolecular feedback systems}.\hskip 1em plus 0.5em minus 0.4em\relax Princeton University Press Princeton, NJ, 2015, \textbf{Note:} We used models and information from chapters 1 and 2 in this paper.

\bibitem{gonccalves2008necessary}
J.~Gon{\c{c}}alves and S.~Warnick, ``Necessary and sufficient conditions for dynamical structure reconstruction of lti networks,'' \emph{IEEE Transactions on Automatic Control}, vol.~53, no.~7, pp. 1670--1674, 2008.

\bibitem{materassi2010topological}
D.~Materassi and G.~Innocenti, ``Topological identification in networks of dynamical systems,'' \emph{IEEE Transactions on Automatic Control}, vol.~55, no.~8, pp. 1860--1871, 2010.

\bibitem{santillan2008use}
M.~Santill{\'a}n, ``On the use of the hill functions in mathematical models of gene regulatory networks,'' \emph{Mathematical Modelling of Natural Phenomena}, vol.~3, no.~2, pp. 85--97, 2008.

\bibitem{angeli2004detection}
D.~Angeli, J.~E. Ferrell~Jr, and E.~D. Sontag, ``Detection of multistability, bifurcations, and hysteresis in a large class of biological positive-feedback systems,'' \emph{Proceedings of the National Academy of Sciences}, vol. 101, no.~7, pp. 1822--1827, 2004.

\end{thebibliography}

\end{document}